\title{A fast 25/6-approximation for the minimum unit disk cover problem}
\author{Paul Liu
\and Daniel Lu\thanks{Department of Computer Science, University of British Columbia. {\tt \{paul.liu.ubc, daniel.lawrence.lu\}@gmail.com}}}
\begin{document}
\thispagestyle{empty}
\maketitle

\begin{abstract}
Given a point set $P$ in $\mathbb{R}^2$, the problem of finding the smallest set of unit disks that cover all of $P$ is NP-hard. We present a simple algorithm for this problem with an approximation factor of $25/6$ in the Euclidean norm and $2$ in the max norm, by restricting the disk centers to lie on parallel lines. The run time and space of this algorithm is $O(n \log n)$ and $O(n)$ respectively. This algorithm extends to any $L_p$ norm and is asymptotically faster than known alternative approximation algorithms for the same approximation factor.

\end{abstract}

\section{Introduction}
Given a point set $P$ in $\mathbb{R}^2$, the \textit{unit disk cover problem} (UDC) seeks to find the smallest set of unit disks that cover all of $P$. This problem arises in applications to facility location, motion planning, and image processing \cite{fowler, hochbaummaass}.

In both the $L_2$ and $L_\infty$ norm, UDC is NP-hard \cite{fowler}. A \textit{shifting strategy} admits various polynomial time approximation algorithms in $d$ dimensions --- for some arbitrarily large integer shifting parameter $\ell$, it is possible to approximate to within $\left(1+\frac{1}{\ell}\right)^{d-1}$ \cite{hochbaummaass, gonzalez}. Since these algorithms rely on optimally solving the problem in an $\ell\times \ell$ square through exhaustive enumeration, they tend to have a slow time complexity that scales exponentially with $\ell$, making them impractical for large data sets. At the cost of incurring a constant approximation factor, the speed of the algorithm may be improved by constraining the disk centers to a unit square grid within the $\ell\times \ell$ square \cite{caltech1, caltech2}. 

If the disk centers are constrained to an arbitrary finite set of points, UDC becomes the discrete unit disk covering problem (DUDC), which is also NP-hard. However, DUDC has a number of different approximation algorithms, with the current state-of-the-art achieving a constant factor of 15 \cite{dudc}. 

In this paper, we present an algorithm that approximates UDC in the plane with the Euclidean and max norms by constraining the disk centers to a set of parallel lines. This algorithm is useable in practical settings and simple to implement. We show that, in the max norm, choosing a set of parallel lines distance $2$ apart achieves an approximation factor of 2. In the Euclidean norm, choosing a set of parallel lines distance $\sqrt{3}$ apart achieves an approximation factor of $25/6$. In both norms, the most costly step is simply from sorting the points. Consequently, the run time and space of the algorithm is $O(n \log n)$ and $O(n)$ respectively.

\renewcommand{\arraystretch}{1.5}
\begin{table}[htb]

\centering
\begin{tabular}{c|c|c|c}
\hline 
Paper & Approximation & Running Time & Year\tabularnewline
\hline 
\hline
\cite{hochbaummaass}  & $\left(1+\frac{1}{\ell}\right)^2$ & $O\left(\ell^4 (2n)^{4\ell^2+1}\right)$ & 1985\tabularnewline
\hline 
\cite{gonzalez}  & $\left(1+\frac{1}{\ell}\right)$ & $O\left(\ell^{2}n^{6\ell\sqrt{2}+1}\right)$ & 1991\tabularnewline
\hline 
\cite{gonzalez}  & 8 & $O\left(n+n\log H\right)$ & 1991\tabularnewline
\hline 
\cite{bronnimann} & $O(1)$ & $O(n^{3}\log n)$ & 1995\tabularnewline
\hline 
\cite{caltech2}  & $\alpha\left(1+\frac{1}{\ell}\right)^{2}$ & $O(Kn)$ & 2001\tabularnewline
\hline 
Ours & $25/6$ & $O(n\log n)$ & 2014\tabularnewline
\hline 
\end{tabular}

\caption{A history of approximation algorithms for the unit disk cover problem in $L_2$. $n$ is the number of points in $P$. The shifting parameter $\ell$ is a positive integer which may be arbitrarily large. $H$ is the number of circles in the optimal solution. $\alpha$ is a constant between 3 and 6. $K$ is a factor at least quadratic in $l$ and polynomial in the size of the approximation lattice.}

\end{table}

\section{Line restricted unit disk cover}
\label{lrudc}
Here we explore a restricted variant of UDC:

Given a point set $P$ in $\mathbb{R}^2$, the \textit{line restricted unit disk cover} problem (LRUDC) seeks to find the smallest set of unit disks --- each with \textit{centers on a given set of parallel lines $S$} --- that cover all of $P$.

For certain carefully chosen sets of lines, LRUDC can be solved efficiently using greedy methods. However, with no restrictions on the placement and number of parallel lines in $S$, LRUDC is NP-hard by reduction from UDC.

\begin{theorem}
Using $O(n^2)$ parallel lines, UDC reduces to LRUDC.
\end{theorem}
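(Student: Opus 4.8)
The plan is to turn a UDC instance on a point set $P$ with $|P| = n$ into an LRUDC instance on the same $P$ together with a set $S$ of $O(n^2)$ parallel lines, chosen so that some optimal UDC solution already has all of its disk centers lying on $S$. Since requiring centers to lie on $S$ can only shrink the feasible set, any LRUDC solution is UDC-feasible and hence $\mathrm{OPT}_{\mathrm{LRUDC}}(P,S) \ge \mathrm{OPT}_{\mathrm{UDC}}(P)$; the content of the reduction is the reverse inequality, which follows once we exhibit an optimal UDC solution supported on $S$. I would fix all the lines in $S$ to be horizontal, so the only freedom I must discretize is the set of admissible $y$-coordinates of the centers.

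The heart of the argument is a canonical-position lemma. I would show there is an optimal UDC solution in which every disk either is centered at a point of $P$ (the case of a disk covering a single point) or has two points of $P$ on its boundary. For a disk $D$ covering a subset $Q \subseteq P$ with $|Q| \ge 2$, consider the region $R_Q = \bigcap_{p \in Q} B(p,1)$ of all centers whose unit disk still covers $Q$, where $B(p,1)$ denotes the closed unit disk centered at $p$; this set is nonempty, compact, and convex, and it contains the center of $D$. I would argue that whenever $Q$ contains two distinct points, $R_Q$ is either a single point or a region bounded by circular arcs meeting in corners, and that each such corner is equidistant, at distance exactly $1$, from two points of $Q$. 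Sliding the center of $D$ to such a corner preserves coverage of $Q$ and places the center at an intersection of two unit circles through points of $P$. A disk covering only one point $p$ I would simply recenter at $p$.

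With the lemma established, the candidate centers form a finite set: at most two unit circles pass through any given pair of points of $P$, contributing $O(n^2)$ corner candidates, together with the $n$ points of $P$ themselves. I would then take $S$ to be the horizontal lines through the $y$-coordinates of all these candidates, a collection of $O(n^2)$ parallel lines. By the lemma, the canonical optimal solution has every center on one of these lines, which yields $\mathrm{OPT}_{\mathrm{LRUDC}}(P,S) \le \mathrm{OPT}_{\mathrm{UDC}}(P)$ and completes the reduction.

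I expect the main obstacle to be the careful treatment of degenerate configurations in the canonical-position lemma: verifying that $R_Q$ really does possess a corner determined by two points rather than a smooth boundary, and handling cases such as two points exactly distance $2$ apart (where $R_Q$ collapses to the single midpoint, still at distance $1$ from both) or clusters of coincident or collinear points. The convexity and boundedness of $R_Q$ should make these cases tractable, but they must be enumerated to guarantee that no admissible center escapes the $O(n^2)$ candidate set.
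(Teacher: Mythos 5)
Your proof is correct, and it is built on the same underlying object as the paper's --- the arrangement $\mathcal{A}$ of $n$ unit circles centered at the points of $P$, whose $O(n^2)$ complexity drives the bound in both arguments --- but the key step is genuinely different. The paper observes only that the subset of $P$ covered by a unit disk is constant as its center moves within a single face of $\mathcal{A}$, so it suffices for $S$ to stab every face; one line per face gives $O(n^2)$ lines, with no convexity argument and no degeneracies to check. You instead prove a stronger canonical-position lemma, snapping each center of an optimal solution to a vertex of $\mathcal{A}$ (or to a point of $P$), and then run lines through those $O(n^2)$ candidate points. Your lemma does hold: $R_Q$ is an intersection of closed unit disks with distinct centers, so it is either a single point or full-dimensional, its boundary cannot lie on a single circle, and hence it always contains a point at distance exactly $1$ from two points of $Q$ --- but establishing this is exactly the degenerate-case bookkeeping you flag as the main obstacle, and the paper's face-based formulation sidesteps it entirely. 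What your route buys in exchange is a stronger conclusion: not just a reduction to LRUDC, but a discretization of UDC to $O(n^2)$ candidate centers (in effect a reduction to the discrete problem DUDC), which the paper's argument does not directly give.
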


\begin{proof}
Consider a circle arrangement $\mathcal A$ consisting of unit radius circles centered at each of the points in the point set $P$. For any circle $C$ in the optimal solution of UDC, let $F$ be the face in $\mathcal A$ in which the center of $C$ resides. Observe that moving this center to any point in $F$ does not change the subset of points in $P$ that $C$ covers. If the set $S$ of parallel lines intersects all faces in $\mathcal A$, then the optimal line-restricted solution can have disks centered in the same set of faces as in the unrestricted case. Hence, any optimal solution of LRUDC for this set of lines is an optimal solution of UDC. Since there are only $O(n^2)$ faces in $\mathcal A$, having one line for each of the faces suffices.
\end{proof}

As an aside, it is unknown whether LRUDC is NP-hard if only $O(n)$ parallel lines are used.


\section{Approximation algorithms for UDC}

In our approximation algorithms for UDC, we use solutions to LRUDC on narrow vertical strips. The set $S$ of restriction lines we use for LRUDC will simply be uniformly spaced vertical lines. For each restriction line we will solve LRUDC confined to the subset of $P$ within a thin strip around the line. All points in $P$ will be in some strip, and we will choose the spacing between restriction lines so that a good approximation to UDC is obtained.

\subsection{A $2$-approximation with the $L_\infty$ norm}
\label{sec:max-norm-p3}
The max norm is a special case, as unit circles in the max norm are axis-aligned squares of width 2. We can take advantage of this fact to obtain a $2$-approximation algorithm.

\begin{enumerate}
\item Partition the plane into vertical strips of width $2$, and let the restriction line set $S$ be the set of vertical lines running down the centre of the strips.
\item For each non-empty strip, use the simple greedy procedure of inserting a square whose top edge is located at the topmost uncovered point. Repeat until all points in the strip are covered.
\end{enumerate}

The asymptotic cost of the algorithm is only $O(n\log n)$, as we need to sort the points by $x$-coordinate to partition them into the strips, and within each strip, we need to sort the points by $y$-coordinate to process the points in order of decreasing height.

The correctness of the greedy procedure in step 2 is easy to see, and is described in some detail by \cite{federgreene}. In fact, for this set of lines, this algorithm solves LRUDC optimally. This is because the greedy procedure is optimal for each strip, and the strips are all \textit{independent} from one another --- meaning that no point will be covered by squares from two different strips.

\begin{figure}
\centering
\includegraphics[width=7cm]{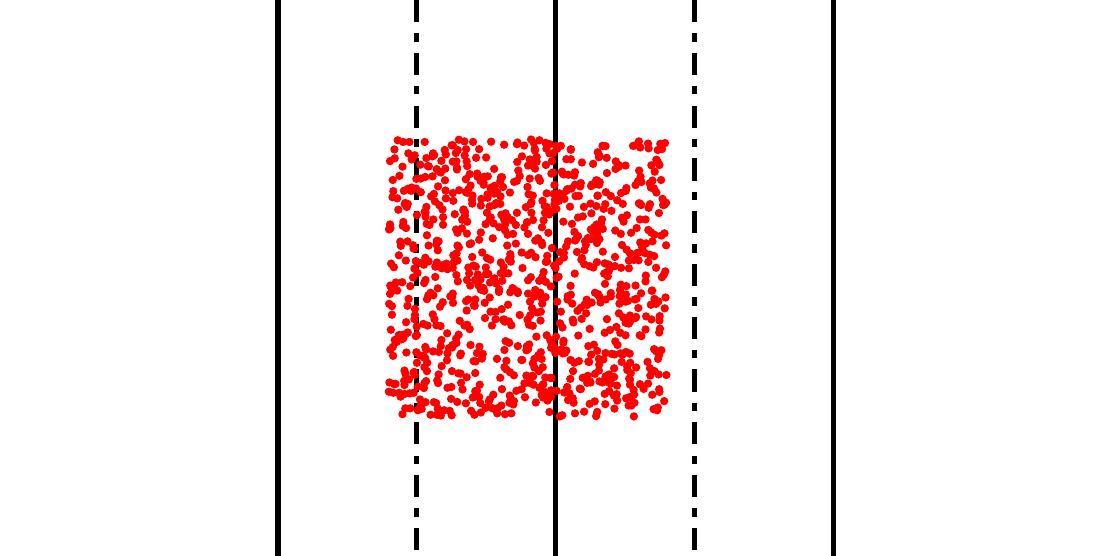}\\
$\,$ \\
\includegraphics[width=7cm]{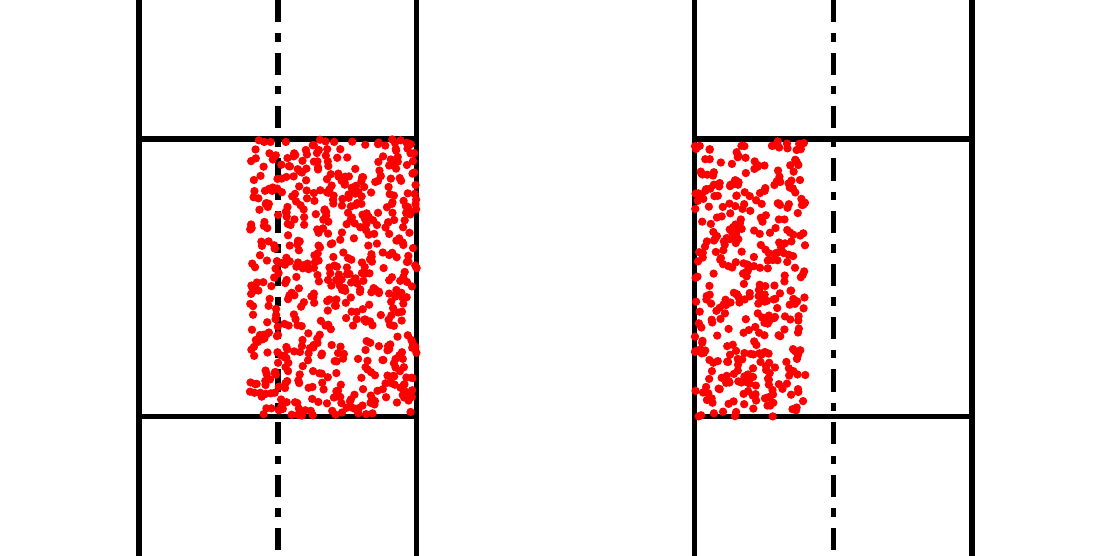}
\caption{Theorem \ref{square}. Top: a point set optimally coverable by one square. Bottom: a covering solution for each strip. Dashed lines denote restriction lines in $S$.}
\label{fig3}
\end{figure}

\begin{theorem}\label{square}
This algorithm is a 2-approximation for UDC in $L_\infty$.
\end{theorem}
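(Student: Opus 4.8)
The plan is to avoid comparing the algorithm's output directly with the optimal UDC solution, and instead exploit the fact established just above the statement: that the greedy procedure solves LRUDC \emph{optimally} for this choice of restriction lines. Consequently it suffices to exhibit \emph{any} line-restricted cover whose size is at most twice $|\mathrm{OPT}|$, where $\mathrm{OPT}$ denotes an optimal unrestricted UDC solution; the greedy (hence optimal line-restricted) solution can only be smaller, giving the factor $2$.

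First I would record the key geometric observation that each unit ball in $L_\infty$ is an axis-aligned square of width exactly $2$, so its horizontal extent is an interval of length $2$. Since the vertical strips also have width $2$, such an interval can overlap at most two consecutive strips. Thus every square meets at most two strips.

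Next, given $\mathrm{OPT}$, I would build a line-restricted cover by a replacement argument. For each square $Q \in \mathrm{OPT}$ and each strip that $Q$ meets, I introduce a new square centered on that strip's restriction line at the same vertical position as $Q$. A short check shows this replacement square, sharing the vertical extent of $Q$ and spanning the full width of the strip, contains every point of $Q$ that lies in that strip. Repeating over all strips met by $Q$ covers all of $Q$'s points using at most two restriction-line squares per square of $\mathrm{OPT}$. Summing over $\mathrm{OPT}$ produces a valid line-restricted cover of size at most $2|\mathrm{OPT}|$, and optimality of the greedy procedure for LRUDC then yields the bound.

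I expect the step requiring the most care to be the replacement itself: verifying that the new, line-centered square genuinely contains all covered points in its strip (with attention to points and squares lying exactly on strip boundaries) and confirming the ``at most two strips'' count in those degenerate cases. The constant is tight for this algorithm, as Figure~\ref{fig3} illustrates: a point set coverable by a single square straddling a strip boundary forces two squares, so no ratio better than $2$ is attainable by this strip decomposition.
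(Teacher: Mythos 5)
Your proposal is correct and follows essentially the same route as the paper: both arguments replace each square of the optimal cover by (at most) two squares centered on the restriction lines of the strips it meets, yielding an $S$-restricted cover of size at most $2|\textsc{opt}|$, and then invoke the fact that the greedy procedure solves LRUDC optimally on these independent strips. Your added remarks on boundary degeneracies and tightness are sensible but do not change the substance of the argument.
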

\begin{proof}

For convenience, we define an \textit{$S$-restricted} solution to be any line-restricted solution covering all the points of $P$ using the same set $S$ of lines as our algorithm, but not necessarily the same set of circles as the one produced by our algorithm.

Let \textsc{opt} be an optimal solution for UDC in $L_\infty$. Each square in the optimal solution will intersect at most two strips, since each strip has the same width as the squares. We can construct an $S$-restricted solution, by simply using two $S$-restricted squares to cover each square in \textsc{opt} (see Figure \ref{fig3}). This uses exactly twice as many squares as \textsc{opt}. Since our algorithm solves LRUDC on $S$ optimally, it will be at least as good as the 2-approximation on each strip. As each strip is independent, our algorithm will be as good as the $S$-restricted solution over all strips.
\end{proof}


\subsection{A $5$-approximation with the $L_2$ norm}
\label{l2p2approx}

First, we present a simple $5$-approximation algorithm that forms the basis of our $25/6$-approximation algorithm.

\begin{enumerate}
\item Partition the plane into vertical strips of width $\sqrt{3}$. As before, let the restriction line set $S$ be the set of center lines of the strips.
\item For each non-empty strip, use the simple greedy procedure of inserting a circle positioned as low as possible while still covering the topmost uncovered point. Assume that all points in the strip are uncovered initially, and repeat until all points in the strip are covered.
\end{enumerate}

Note that the only difference between the algorithm above and the one for $L_\infty$ is the width of the vertical strip. With a width of $\sqrt{3}$, circles centred on a particular strip can cover points of neighbouring strips. Hence the strips are no longer independent, and we run the greedy procedure in step 2 assuming that all the points in the strip are uncovered initially (even though they may be covered by circles from different strips). Alternatively, we could remove points already covered by neighbouring strips as we go, but this makes no difference to the approximation factor or the asymptotic run time of the algorithm.

As before, parititioning the points into each strip is $O(n \log n)$ time. Within each strip, the subprocedure of greedily covering the circles can be done in $O(n_s \log n_s)$ time, where $n_s$ is the number of points in the strip. This is achieved by transforming the point covering problem into a segment covering problem instead. 

The reduction is as follows: from each point $p$ in the strip draw a unit circle $C_p$ centred at $p$. The circle $C_p$ intersects the restriction line of the strip in two points, creating a segment $s_p$ between the two points. If the centre of an $S$-restricted circle is placed anywhere on $s_p$, it will cover $p$. Hence to cover all the points in the strip, we simply have to stab all the segments $\{s_p\}_{p\in P}$ with points representing centres of $S$-restricted circles. The strategy of greedily covering the topmost point reduces to choosing the stabbing point as low as possible, while still stabbing the topmost unstabbed segment. This can be done in $O(n_s \log n_s)$ time via sorting the segments by $y$-coordinate.

The correctness of the greedy subprocedure in step 2 follows from the same logic as Section \ref{sec:max-norm-p3}.

The argument that our algorithm is a 5-approximation is based on the following fact: for each circle $C$ in an optimal solution \textsc{opt} of UDC, there exists an $S$-restricted solution which covers each $C$ entirely using at most five circles. Furthermore, this solution is redundant in that the points of each strip are covered completely by $S$-restricted circles on that strip. We call such a solution \textit{oblivious}, as it does not take into account points covered by circles of neighbouring strips. Note that our algorithm produces a solution that is at least as good as any oblivious $S$-restricted solution, as each strip is solved optimally by our algorithm. It follows that our algorithm is also a 5-approximation.

It is necessary in the worst case to cover $C$ entirely since an adversary may provide an input point set $P$ consisting of arbitrarily many points coverable by a single circle (Figures \ref{fig1} and \ref{fig2}). The following proofs use a straightforward application of geometry to establish bounds on the number of $S$-restricted circles to cover $C$. There are two possible cases --- either $C$ intersects two strips, or $C$ intersects three strips.

\begin{obs}
\label{4c}
Let \textsc{opt} be the set of optimal circles for UDC. Suppose that the centre of a circle $C\in \textsc{opt}$ does not lie within $1-\frac{\sqrt{3}}{2}$ of a restriction line. Then, any oblivious $S$-restricted solution will require at least four circles to cover $C$.

Moreover, there exists an oblivious $S$-restricted solution which uses exactly four circles to cover $C$.
\end{obs}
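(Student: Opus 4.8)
The plan is to fix coordinates and reduce to a single canonical two-strip configuration, then bound the two strips separately. I would place the restriction line nearest the centre of $C$ at $x=0$, so that the adjacent strip boundaries are $x=\pm\tfrac{\sqrt3}{2}$ and the next restriction line sits at $x=\sqrt3$; write the centre of $C$ as $(c_x,c_y)$ and assume by symmetry $0<c_x\le\tfrac{\sqrt3}{2}$. The hypothesis that the centre is at least $1-\tfrac{\sqrt3}{2}$ from every restriction line is exactly $c_x\ge 1-\tfrac{\sqrt3}{2}$ (the constraint from the right line is automatic), and a one-line calculation on the horizontal span $[c_x-1,\,c_x+1]$ shows this keeps the left edge of $C$ at or right of $x=-\tfrac{\sqrt3}{2}$ while the right edge never reaches $x=\tfrac{3\sqrt3}{2}$. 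Hence $C$ meets exactly the centre strip and the right strip, and by obliviousness the part of $C$ with $x\le\tfrac{\sqrt3}{2}$ must be covered by circles centred on $x=0$ and the part with $x\ge\tfrac{\sqrt3}{2}$ by circles centred on $x=\sqrt3$. Since $C$ and both restriction lines are symmetric about $y=c_y$, I would look only for coverings symmetric about $y=c_y$.

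For the lower bound of four, I would exhibit in each strip two points of $C$ that no single $S$-restricted circle of that strip can cover at once. In the centre strip take the top and bottom points $(c_x,c_y\pm 1)$: both lie at horizontal offset $c_x$ from $x=0$, where a unit circle centred on $x=0$ has vertical reach $2\sqrt{1-c_x^2}<2$ (as $c_x>0$), too small to span the height-$2$ gap between them, so the centre strip needs at least two circles. In the right strip take the two points of $C$ on (or just inside) the boundary line $x=\tfrac{\sqrt3}{2}$: there $C$ has vertical extent $2\sqrt{1-(\tfrac{\sqrt3}{2}-c_x)^2}$, which one checks is strictly larger than the reach $2\sqrt{1-\tfrac34}=1$ of a unit circle centred on $x=\sqrt3$ at that offset, so the right strip also needs at least two circles. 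Summing gives at least four.

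For the matching upper bound I would place circles symmetrically at $(0,c_y\pm t)$ and $(\sqrt3,c_y\pm s)$ and choose the offsets so that each pair covers its portion. Taking $t=\tfrac12$ already makes the equatorial row $y=c_y$ of the centre portion covered out to both $x=\tfrac{\sqrt3}{2}$ and the leftmost point $x=c_x-1$, precisely because $c_x\ge 1-\tfrac{\sqrt3}{2}$, and it places the topmost point $(c_x,c_y+1)$ and the boundary corner $\bigl(\tfrac{\sqrt3}{2},\,c_y+\sqrt{1-(\tfrac{\sqrt3}{2}-c_x)^2}\bigr)$ inside the upper circle; a similar choice of $s$ near $\tfrac12$ handles the right cap, where the rightmost point $(c_x+1,c_y)$ and the same boundary corners are the critical constraints.

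The main obstacle is this last verification: checking finitely many extreme points does not by itself guarantee that the curved boundary arc of $C$ lies inside the union of the two chosen circles, since the region to be covered is not the convex hull of those points. I would close this gap by parametrizing the relevant arc of $\partial C$ as $(c_x+\cos\theta,\,c_y+\sin\theta)$ and proving the inequality $c_x^2+2c_x\cos\theta-\sin\theta+\tfrac14\le 0$ (the membership condition for the centre portion with $t=\tfrac12$) on the whole admissible angular range, by locating its single interior critical point and comparing with the endpoint values, and analogously for the right cap. Establishing this inequality, and confirming that the offsets can be chosen consistently over the entire range $c_x\in[1-\tfrac{\sqrt3}{2},\,\tfrac{\sqrt3}{2}]$, is where the real work lies; the combinatorial count of four follows immediately once coverage on each side is verified.
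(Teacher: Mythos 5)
Your proposal is correct and follows essentially the same route as the paper: normalize so that the two relevant restriction lines sit at $x=0$ and $x=\sqrt{3}$, argue that each of the two strips an oblivious solution must handle needs two circles because the relevant chord of $C$ exceeds the reach of a single line-restricted circle, and exhibit the matching four-circle cover centred at $\left(0,\pm\frac{1}{2}\right)$ and $\left(\sqrt{3},\pm\frac{1}{2}\right)$. The only differences are cosmetic --- you use the top and bottom points of $C$ as the centre-strip witness where the paper reuses the length-greater-than-one boundary chord for both strips, and you are more explicit than the paper's ``it is easy to see'' about verifying the cover (your membership inequality $c_x^2+2c_x\cos\theta-\sin\theta+\frac{1}{4}\le 0$ does hold on the required arc, since the unique interior critical point is a minimum and both endpoint values are nonpositive given $1-\frac{\sqrt{3}}{2}\le c_x\le\frac{\sqrt{3}}{2}$).
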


\begin{proof}
Without loss of generality, let $C$ be centered at $(x_c,0)$ where $1-\frac{\sqrt{3}}{2}\leq x_c \leq 3\frac{\sqrt{3}}{2}-1$. For $x_c$ in this range, $C$ intersects two strips. Let the corresponding restriction lines be called $\mathcal L_1$ and $\mathcal L_2$ and be placed at $x=0$ and $x=\sqrt{3}$ respectively. Consider the strip boundary, a vertical line $\mathcal L_{12}$ at $x=\frac{\sqrt{3}}{2}$. The intersection of $C$ with this line forms a segment of length greater than 1 but smaller than 2. To cover $C$ entirely, this segment must be covered. For both strips that $C$ intersects, the algorithm would cover this segment, as each strip is oblivious that the neighbouring strip may have covered the same segment. Since each line-restricted circle can only cover a segment of length 1 on $\mathcal L_{12}$, each strip would need two circles, resulting in a total of 4.

It is easy to see that $C$ is covered by the four circles centred at $\left(0,\frac{1}{2}\right)$, $\left(0,-\frac{1}{2}\right)$, $\left(\sqrt{3},\frac{1}{2}\right)$, $\left(\sqrt{3},-\frac{1}{2}\right)$ (see Figure \ref{fig1}). Note that the obliviousness constraint is satisfied as the circles within each strip do not depend on circles from neighbouring strips to cover the points from $C$.
\end{proof}

\begin{figure}[h]
\centering
$\mathcal L_1$ \hspace{5pt} $\mathcal L_{12}$ \hspace{5pt} $\mathcal L_2$\\
$\,$ \\
\includegraphics[width=7.5cm]{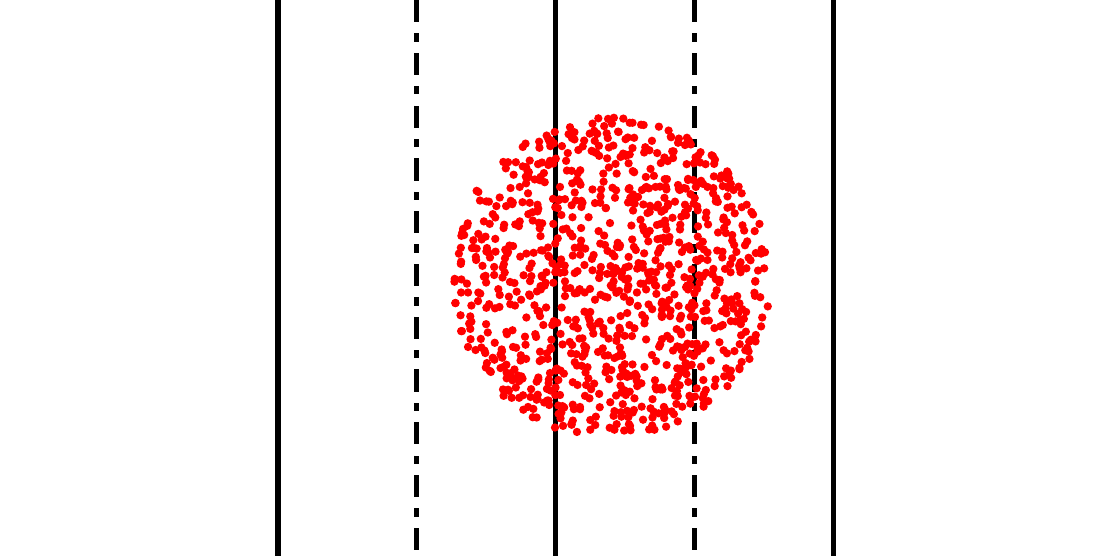}\\
$\,$ \\
\includegraphics[width=7.5cm]{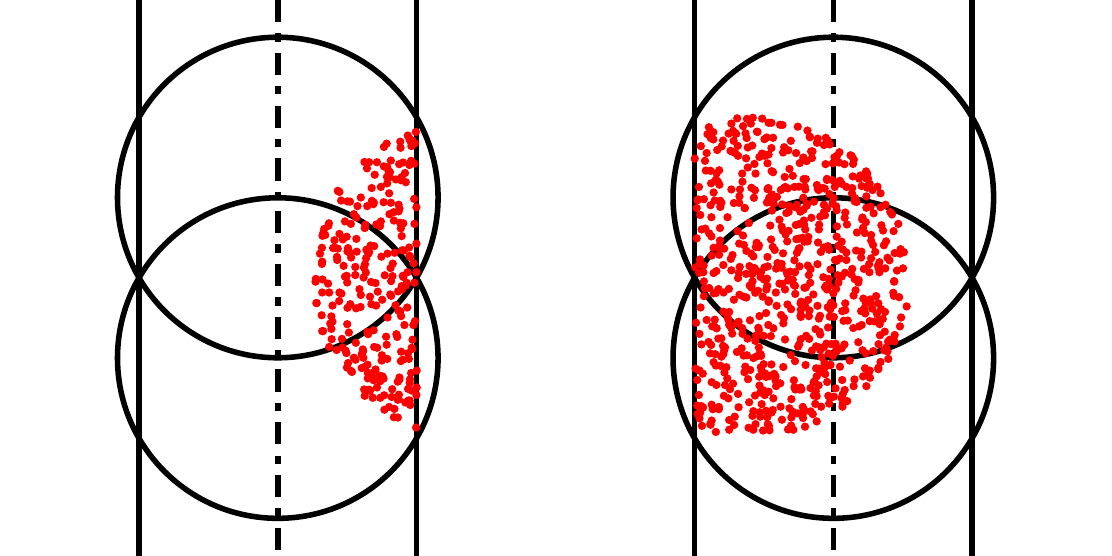}
\caption{Observation \ref{4c}. Top: a set of points optimally coverable by one circle. Bottom: a covering solution for each strip. Dashed lines denote restriction lines in $S$.}
\label{fig1}
\end{figure}

\begin{obs}
\label{5c}
Let \textsc{opt} be the set of optimal circles for UDC. Suppose that the centre of a circle $C\in \textsc{opt}$ lies within $1-\frac{\sqrt{3}}{2}$ of a restriction line. Then, any oblivious $S$-restricted solution will require at least five circles to cover $C$.

Moreover, there exists an oblivious $S$-restricted solution which uses exactly five circles to cover $C$.
\end{obs}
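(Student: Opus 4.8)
The plan is to mirror the structure of Observation~\ref{4c}, but now account for the fact that a circle centred within $1-\frac{\sqrt3}{2}$ of a restriction line pokes across \emph{two} strip boundaries rather than one. First I would fix coordinates: by symmetry assume $C$ is centred at $(x_c,0)$ with $0 \le x_c \le 1-\frac{\sqrt3}{2}$, place the nearest restriction line $\mathcal L_0$ at $x=0$ and its neighbours at $x=\pm\sqrt3$, and call the two strip boundaries $\mathcal L_{-}$ and $\mathcal L_{+}$ at $x=-\frac{\sqrt3}{2}$ and $x=\frac{\sqrt3}{2}$. In this range $C$ meets all three strips, and the key quantity is the length of the chord $C$ cuts on each boundary: on $\mathcal L_{+}$ the chord has half-length $a_+=\sqrt{1-(\frac{\sqrt3}{2}-x_c)^2}$ and on $\mathcal L_{-}$ it has half-length $a_-=\sqrt{1-(\frac{\sqrt3}{2}+x_c)^2}$. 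A short computation gives $2a_+>1>2a_-$ whenever $x_c>0$, with both chords having length exactly $1$ when $x_c=0$.

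For the lower bound I would reuse the fact, implicit in Observation~\ref{4c}, that any unit circle centred on a vertical line covers a sub-segment of length at most $1$ on a parallel line at horizontal distance at least $\frac{\sqrt3}{2}$ from its centre. Obliviousness forces the central strip to cover the chord on $\mathcal L_{+}$ using only circles on $\mathcal L_0$; since this chord has length exceeding $1$, at least two such circles are required. The same chord bounds the right sliver, which the right strip must cover obliviously using circles on $x=\sqrt3$, again forcing at least two circles. Finally the left sliver forces at least one circle on $x=-\sqrt3$. As circles on distinct restriction lines are distinct, this yields the bound of five.

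For the matching upper bound I would exhibit the five circles explicitly: the pair $(0,\pm(a_+-\tfrac12))$ on $\mathcal L_0$ and the pair $(\sqrt3,\pm(a_+-\tfrac12))$ on the right line, together with the single circle $(-\sqrt3,0)$ on the left line (compare Figure~\ref{fig2}). The two pairs are placed so that their length-$1$ coverage intervals on $\mathcal L_{+}$ union to exactly $[-a_+,a_+]$, and I would then verify that these same circles also reach the curved parts of $C$ --- in particular the apex $(x_c,1)$ and nadir $(x_c,-1)$ of the central portion and the extreme point of each sliver --- by direct distance checks, which remain within $1$ throughout the admissible range of $x_c$. Obliviousness holds by construction, since each strip's points are covered solely by circles on that strip's line.

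The main obstacle I anticipate is not the counting but the geometric verification in the upper bound: unlike the boundary chords, the top and bottom arcs of the central portion are not automatically covered once the chords are, so I must choose the vertical offsets of the central pair carefully and confirm that the worst case over $0\le x_c\le 1-\frac{\sqrt3}{2}$ still leaves every point of $C$ within distance $1$ of some chosen centre. A secondary subtlety is the degenerate case $x_c=0$, where both chords have length exactly $1$ and the lower bound momentarily relaxes; I would note that the claim concerns the generic configuration and that, in any event, the upper bound of five --- which is all the approximation analysis requires --- continues to hold.
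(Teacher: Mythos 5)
Your proposal is correct and follows essentially the same route as the paper: the same three-strip coordinate setup, the same lower bound via chord lengths on the two strip boundaries (a chord longer than $1$ forces two oblivious circles from each adjacent strip, plus one more for the far sliver), and an explicit five-circle witness for the upper bound. The differences are cosmetic --- the paper fixes the four offset centres at $y=\pm\frac{1}{2}$ rather than your adaptive $\pm(a_+-\frac{1}{2})$ and simply asserts coverage of the arcs as ``easy to see,'' and the degenerate case $x_c=0$ you flag is indeed a point where the strict lower bound relaxes, but as you note the upper bound (which is all the approximation analysis uses) still holds there.
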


\begin{proof}
Without loss of generality, let $C$ be centered at $(x_c,0)$ where $0\leq x_c < 1-\frac{\sqrt{3}}{2}$. For $x_c$ in this range, $C$ intersects three strips. Let the corresponding restriction lines be called $\mathcal L_1$, $\mathcal L_2$ and $\mathcal L_3$ and be placed at $x=-\sqrt{3}$, $x=0$ and $x=\sqrt{3}$ respectively. Consider the two strip boundaries, vertical lines $\mathcal L_{12}$ at $x=-\frac{\sqrt{3}}{2}$ and $\mathcal L_{23}$ at $x=\frac{\sqrt{3}}{2}$. The intersection of $C$ with $\mathcal L_{23}$ forms a segment centered at $y=0$ of length greater than 1 but smaller than 2, and the intersection of $C$ with $\mathcal L_{12}$ forms a segment centered at $y=0$ of length smaller than 1. To cover $C$ entirely, these segments must be covered. Since each line-restricted circle can only cover a segment of length 1 on the strip boundary, and each strip is oblivious that the neighbouring strip may have covered the same segment, strips 2 and 3 would need two circles each and strip 1 would need one circle, resulting in a total of 5.

Finally, it is easy to see that $C$ is covered by the five circles centred at $\left(-\sqrt{3},0\right)$, $\left(0,\frac{1}{2}\right)$, $\left(0,-\frac{1}{2}\right)$, $\left(\sqrt{3},\frac{1}{2}\right)$, $\left(\sqrt{3},-\frac{1}{2}\right)$ (see Figure \ref{fig2}).
\end{proof}

\begin{figure}[h]
\centering
\hspace{12pt} $\mathcal L_1$ \hspace{34pt} $\mathcal L_2$ \hspace{34pt} $\mathcal L_3$\\
$\,$ \\
\includegraphics[width=9cm]{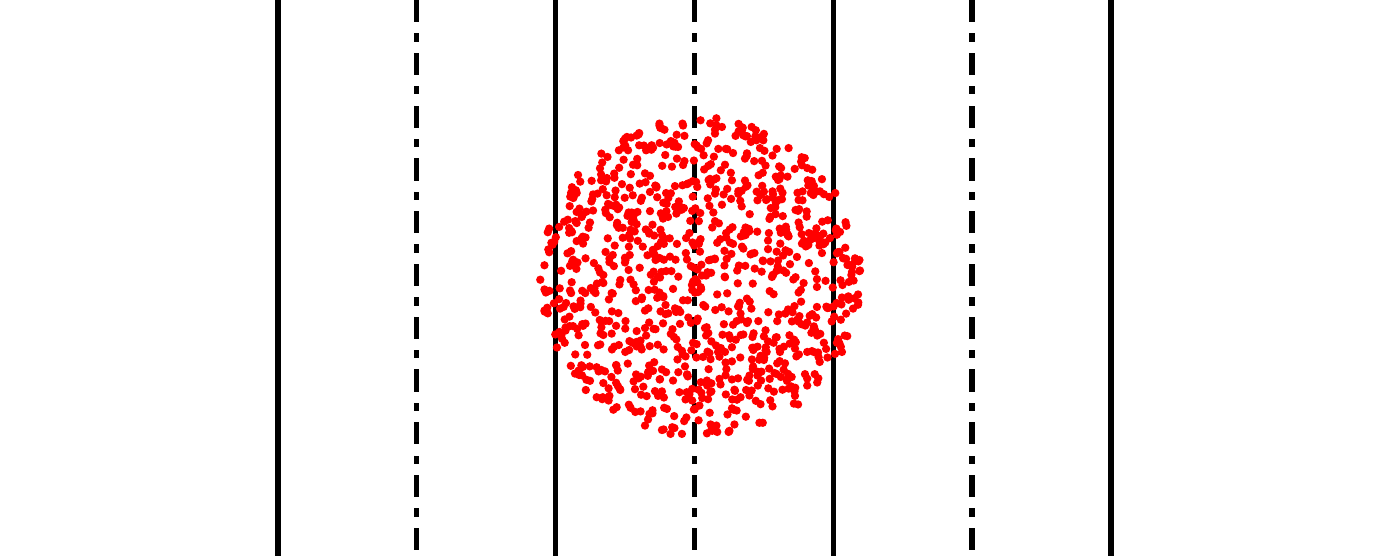}\\
$\,$ \\
\includegraphics[width=9cm]{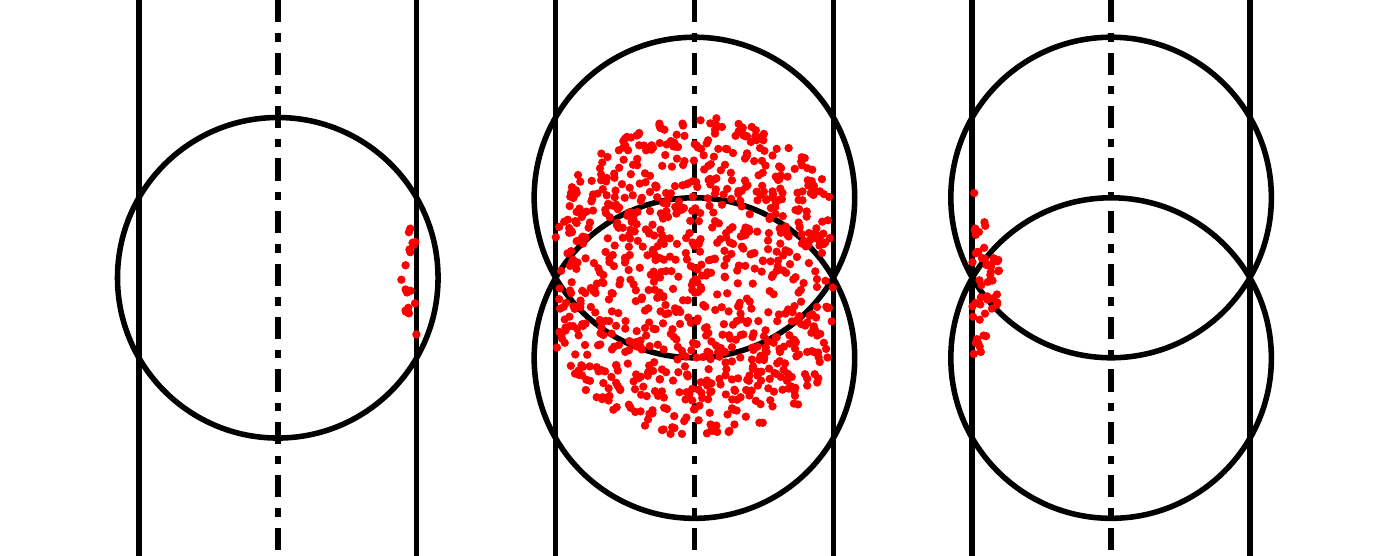}
\caption{Observation \ref{5c}. Top: a set of points optimally coverable by one circle. Bottom: a covering solution for each strip. Dashed lines denote restriction lines in $S$.}
\label{fig2}
\end{figure}

\begin{theorem}
\label{5app}
This algorithm is a $5$-approximation for UDC in $L_2$.
\end{theorem}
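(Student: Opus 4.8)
The plan is to bound the number of circles our algorithm produces against $5|\textsc{opt}|$ by exhibiting an \emph{oblivious} $S$-restricted solution of that size and then invoking per-strip optimality. First I would recall the two facts already assembled in this section: (i) the greedy subprocedure solves LRUDC optimally on each strip, treating every point of the strip as initially uncovered, so the total number of circles our algorithm outputs equals the sum over strips of the optimal per-strip count; and (ii) consequently the algorithm's output is no larger than \emph{any} oblivious $S$-restricted solution, since such a solution also covers each strip's points using only circles on that strip's restriction line, and our algorithm is optimal on each strip.

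Next I would construct one such oblivious solution directly from $\textsc{opt}$. For each circle $C\in\textsc{opt}$, apply Observation \ref{4c} when the centre of $C$ lies at distance at least $1-\frac{\sqrt 3}{2}$ from every restriction line, and Observation \ref{5c} otherwise. In either case these observations hand us a set of at most five $S$-restricted circles that cover $C$ entirely and obliviously, meaning each strip meeting $C$ covers its own portion of $C$ without relying on neighbouring strips. Taking the union of these circle-sets over all $C\in\textsc{opt}$ yields a family of at most $5|\textsc{opt}|$ $S$-restricted circles.

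I would then check that this union is a genuine oblivious $S$-restricted solution. Every point $p\in P$ lies in exactly one strip and is covered by some $C\in\textsc{opt}$; the chosen covering of $C$ covers the portion of $C$ inside $p$'s strip using a circle centred on that strip's restriction line, so $p$ is covered by a circle of its own strip. Hence the union covers all of $P$ and respects obliviousness. Combining this with fact (ii) gives that the algorithm uses at most $5|\textsc{opt}|$ circles, which is exactly the claimed $5$-approximation.

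The main obstacle, and essentially the only point requiring care, is the obliviousness bookkeeping in the previous paragraph: I must confirm that ``cover $C$ entirely'' in Observations \ref{4c} and \ref{5c} really guarantees that every point of $P\cap C$ is covered by a circle belonging to that point's own strip. This is what legitimises applying per-strip optimality independently and summing the per-strip counts, rather than merely covering the region $C$ as a whole, and it is what lets the union of per-circle coverings be treated strip-by-strip without cross-strip dependence.
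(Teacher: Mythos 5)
Your proposal is correct and follows essentially the same route as the paper: the paper's proof likewise combines per-strip optimality (so the algorithm beats any oblivious $S$-restricted solution) with the existence, via Observations \ref{4c} and \ref{5c}, of an oblivious $S$-restricted solution of size at most $5|\textsc{opt}|$. Your extra bookkeeping about each point being covered by a circle of its own strip is exactly what the paper's definition of ``oblivious'' encapsulates, so no gap remains.
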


\begin{proof}
Since our algorithm solves each strip optimally, it produces a solution that is at least as good as any oblivious $S$-restricted solution. We have shown that there exists such a solution with an approximation factor of 5, which follows directly from Observations \ref{4c} and \ref{5c}. Hence, our algorithm has an approximation factor of at most 5.
\end{proof}

\subsection{Improving the 5-approximation to a 25/6-approximation}
\label{improve}

To improve the 5-approximation algorithm to a $25/6$-approximation algorithm, we employ a ``smoothing'' technique. From the calculations in Observation \ref{5c}, the region where a circle $C$ in an optimal solution requires five circles to cover is only $2-\sqrt{3}$ wide. In all other cases, $C$ can be covered by only four circles. Since $2-\sqrt{3}$ is less than one-sixth of the width of the entire strip $\sqrt{3}$, it is intuitive that we can do better than a 5-approximation. Here, we show that by shifting the strip partition, we can smooth out the regions that require 5-circles and achieve a $25/6$-approximation.

To be precise, we define a strip partition with shift $\alpha$ to be the partition of $\mathbb{R}^2$ into width $\sqrt{3}$ vertical strips, where the boundaries of the strips are located at $x=\alpha + k\sqrt{3}$, $k\in \mathbb{Z}$. As usual, our restriction lines are in the centres of these strips. Our algorithm with the smoothing technique is:

\begin{enumerate}
\item For $\alpha=0,\frac{\sqrt{3}}{6},\frac{2\sqrt{3}}{6},\ldots, \frac{5\sqrt{3}}{6}$, partition the plane into vertical strips of width $\sqrt{3}$ with shift $\alpha$ and use the 5-approximation algorithm.
\item Return the best of the six solutions obtained above.
\end{enumerate}

\begin{theorem}
The algorithm with smoothing approximates is a $25/6$-approximation for UDC in $L_2$.
\end{theorem}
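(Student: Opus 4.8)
The plan is to run an averaging argument over the six shifted executions, reducing everything to a single per-circle question: for a fixed optimal circle $C$, how many of the six shifts force the oblivious $S$-restricted covering to spend five circles on $C$ rather than four? First I would fix an optimal solution \textsc{opt}, and for each shift $\alpha \in \{0,\tfrac{\sqrt3}{6},\ldots,\tfrac{5\sqrt3}{6}\}$ let $c_\alpha(C)\in\{4,5\}$ be the number of $S$-restricted circles needed to cover $C$ as given by Observations \ref{4c} and \ref{5c}. Taking the union over all $C\in\textsc{opt}$ of these coverings produces an oblivious $S$-restricted solution for shift $\alpha$ of total size $\sum_C c_\alpha(C)$; since our algorithm solves each strip optimally it is at least as good as this solution, so the size $A_\alpha$ of our output for shift $\alpha$ satisfies $A_\alpha \le \sum_{C\in\textsc{opt}} c_\alpha(C)$.

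The crux is the per-circle bound $\sum_\alpha c_\alpha(C)\le 25$. By Observation \ref{5c}, $c_\alpha(C)=5$ exactly when the centre of $C$ lies within $1-\tfrac{\sqrt3}{2}$ of a restriction line of the shift-$\alpha$ partition, that is, when the centre's position reduced modulo the period $\sqrt3$ lands in a single interval (the ``bad region'') of width $2\bigl(1-\tfrac{\sqrt3}{2}\bigr)=2-\sqrt3$; otherwise $c_\alpha(C)=4$. The six shifts place the centre of $C$ at six positions equally spaced by $\tfrac{\sqrt3}{6}$ around this period. The key step is the strict inequality $2-\sqrt3 < \tfrac{\sqrt3}{6}$, equivalent to $12 < 7\sqrt3$: because the bad region is strictly narrower than the spacing between consecutive sample positions, it can contain at most one of the six positions. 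Hence at most one shift gives $c_\alpha(C)=5$ while the remaining (at least five) give $c_\alpha(C)=4$, so $\sum_\alpha c_\alpha(C)\le 5\cdot 4 + 1\cdot 5 = 25$.

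Combining the two steps and swapping the order of summation,
\[
\sum_\alpha A_\alpha \;\le\; \sum_\alpha \sum_{C\in\textsc{opt}} c_\alpha(C) \;=\; \sum_{C\in\textsc{opt}} \sum_\alpha c_\alpha(C) \;\le\; 25\,|\textsc{opt}|.
\]
Averaging over the six runs gives $\min_\alpha A_\alpha \le \tfrac{1}{6}\sum_\alpha A_\alpha \le \tfrac{25}{6}|\textsc{opt}|$. Since step 2 returns the smallest of the six solutions, its size is exactly $\min_\alpha A_\alpha$, which yields the claimed $25/6$-approximation.

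The hard part will be the geometric counting step, specifically making rigorous that the bad region is a \emph{single} interval of width exactly $2-\sqrt3$ in the quotient by the period (there is exactly one restriction line per period $\sqrt3$, so there is no risk of several bad sub-intervals), and that the six shifts correspond to six equally spaced samples of the centre's reduced position spaced $\tfrac{\sqrt3}{6}$ apart. Once this is set up cleanly, ``at most one sample in the bad region'' follows because any two of the six samples are at least $\tfrac{\sqrt3}{6}$ apart, while the region has length $<\tfrac{\sqrt3}{6}$. I would flag that this inequality is extremely tight ($12$ versus $7\sqrt3\approx 12.12$), so the argument has essentially no slack; this is precisely why six shifts are needed and why this smoothing scheme cannot be pushed below $25/6$.
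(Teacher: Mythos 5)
Your proof is correct and follows essentially the same approach as the paper: both arguments rest on the fact that the ``5-circle'' region of width $2-\sqrt{3}$ around each restriction line is narrower than the shift increment $\sqrt{3}/6$, so each optimal circle incurs the cost of five circles for at most one of the six shifts, after which a min-versus-average (pigeonhole) step gives $4|\textsc{opt}|+\tfrac{1}{6}|\textsc{opt}|$. The only cosmetic difference is that the paper partitions \textsc{opt} into six classes $q_i$ by which width-$\tfrac{\sqrt{3}}{6}$ subinterval contains each centre and bounds $\min_i q_i$, whereas you sum the per-circle costs over all six shifts and average; these are the same double-counting argument.
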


\begin{proof}
Let \textsc{opt} be the set of optimal circles for UDC, and let $S_1,\ldots,S_6$ be the six sets of shifted line sets used in our algorithm. For each of the six shifts, there exists an oblivious $S_i$-restricted solution. 

Suppose that for $i=1,\ldots,6$, there are $q_i$ circles in \textsc{opt} with centers $(x,y)$ satisfying 
\begin{align}
\alpha_i + k\sqrt{3} + \frac{5}{12}\sqrt{3} \leq x\leq \alpha_i + k\sqrt{3} + \frac{7}{12}\sqrt{3}\label{eq1}
\end{align}
for $k\in\mathbb{Z}$, where $\alpha_i = (i-1)\frac{\sqrt{3}}{6}$. Note that since these six ranges fill the plane, $\sum q_i=|\textsc{opt}|$. 

According to Observations \ref{4c} and \ref{5c}, each solution has five circles for every circle in \textsc{opt} that has center $(x,y)$ satisfying
\begin{align}
\alpha_i + k\sqrt{3} + \sqrt{3}-1 \leq x \leq \alpha_i + k\sqrt{3} + 1\label{eq2}
\end{align}
and four circles for every other circle in \textsc{opt}. Since the range in Equation \ref{eq2} is a subrange of that in Equation \ref{eq1}, it follows that an oblivious $S_i$-restricted solution of Section \ref{l2p2approx} uses no more than
\begin{align}
5q_i + 4\sum_{\substack{j=1\\ j \neq i}}^6 q_j
\end{align}
circles. 

For $i=1,\ldots,6$, let $A_i$ be the $i$-th candidate solution generated by our algorithm and let $A^*$ be the solution with fewest circles out of the 6 $A_i$'s. Since each $A_i$ is at least as good as any oblivious $S_i$-restricted solution, we have the inequality:
\begin{align}
|A^*| & = \min_{i=1,..,6} |A_i| \\
&\leq \min_{i=1,..,6} \left[5q_i + 4\sum_{\substack{j=1\\ j \neq i}}^6 q_j\right]\\
&=4|\textsc{opt}| + \min_{i=1,..,6} q_i\\
&\leq 4|\textsc{opt}| + \frac{1}{6}|\textsc{opt}| = \frac{25}{6}|\textsc{opt}|
\end{align}
Hence the output of our algorithm is a $25/6$ approximation to the unit disk cover problem.
\end{proof}


\section{Extensions}

The approximation algorithm outlined above can be applied to any $L_p$ norm --- one simply has to figure out the worst case number of oblivious line-restricted circles to cover an arbitrary circle in the plane. For each norm, the optimal line spacing varies. However, our algorithm is guaranteed to produce constant factor approximations when the spacing less than $2$.

Our algorithm can also be extended to higher dimensions. A natural extension is to use a collection of uniformly spaced parallel lines, and solve LRUDC in a small tube surrounding each line. In this way, we can obtain approximations in arbitrarily large $d$ dimensions, albeit with an approximation factor that scales exponentially with $d$. In particular, applying this technique to the $L_\infty$ norm gives a $2^{d-1}$-approximation in $d$ dimensions, matching an earlier result by \cite{gonzalez}.

Finally, our algorithm applies to covering objects more general than points, such as polygonal shapes. Our proofs only rely on the fact that any optimal circle can be covered entirely with a constant number of oblivious line-restricted circles. The fact that we are covering points is not used.

\section{Concluding Remarks}

We presented a simple algorithm to approximate the unit disk cover problem within factor of $25/6$ in $L_2$ and within a factor of 2 in $L_\infty$.  

The algorithm runs in $O(n \log n)$ time $O(n)$ space, with the most time consuming step being a simple sorting of the input. On a practical level, we believe our algorithm has a good mix of performance and simplicity, with a typical implementation of no more than 30 lines of C++.

We wonder what the best approximation an oblivious line-restricted approach can achieve for the unit disk cover problem. For the $L_\infty$ norm, we saw that 2 was the best possible approximation factor for lines spaced equally apart. Similarly, one can show for the $L_2$ norm that a lower bound of $15/4$ is the best that can be done with an oblivious algorithm for equally spaced lines. It would be interesting to see if these lower bounds can be broken by an oblivious algorithm once the equal spacing condition is removed. Finally, an analysis of the optimal spacing in other $L_p$ norms would be interesting as well.

\section{Acknowledgements}
The authors would like to thank Professors David Kirkpatrick and Will Evans for their limitless patience and guidance, as well as Kristina Nelson for reading the early drafts.

\small
\bibliographystyle{abbrv}



\end{document}